\pgfplotsset{width=10cm,compat=1.9}
\pgfplotsset{width=8cm,compat=1.9}
\definecolor{scgreen}{HTML}{7fc97f}
\definecolor{scorange}{HTML}{fdc086}
\definecolor{scpurple}{HTML}{beaed4}
\definecolor{scred}{HTML}{e41a1c}
\definecolor{scdgreen}{HTML}{4daf4a}
\newcommand{\benc}{B}
\newcommand{\maxareaboothbase}{31\%}
\newcommand{\maxpowerboothbase}{38\%}
\newcommand{\maxareaboothand}{13\%}
\newcommand{\maxpowerboothand}{12\%}
\newcommand*\best[1]{{\bf #1}}
\newtheorem{theorem}{Theorem}
\newtheorem{lemma}[theorem]{Lemma}
\begin{document}
%

\title{On the Systematic Creation of Faithfully Rounded \\ Commutative Truncated Booth Multipliers}

\author{
\IEEEauthorblockN{Theo Drane, Samuel Coward}
\IEEEauthorblockA{Graphics Numerical Hardware Group\\
Intel Corporation\\
Email: \{theo.drane,samuel.coward\}@intel.com}
\and
\IEEEauthorblockN{Mertcan Temel}
\IEEEauthorblockA{Advanced Architecture Design Group\\
Intel Corporation\\
Email: mert.temel@intel.com}
\and
\IEEEauthorblockN{Joe Leslie-Hurd}
\IEEEauthorblockA{Design Engineering Group\\
Intel Corporation\\
Email: joe.leslie-hurd@intel.com}
}



\maketitle

\begin{abstract}
In many instances of fixed-point multiplication, a full precision result is not required. Instead it is sufficient to return a faithfully rounded result. Faithful rounding permits the machine representable number either immediately above or below the full precision result, if the latter is not exactly representable. Multipliers which take full advantage of this freedom can be implemented using less circuit area and consuming less power. The most common implementations internally truncate the partial product array. However, truncation applied to the most common of multiplier architectures, namely Booth architectures, results in non-commutative implementations. The industrial adoption of truncated multipliers is limited by the absence of formal verification of such implementations, since exhaustive simulation is typically infeasible. We present a commutative truncated Booth multiplier architecture and derive closed form necessary and sufficient conditions for faithful rounding. We also provide the bit-vectors giving rise to the worst-case error. We present a formal verification methodology based on ACL2 which scales up to 42 bit multipliers. We synthesize a range of commutative faithfully rounded multipliers and show that truncated booth implementations are up to \maxareaboothbase~smaller than externally truncated multipliers.
\end{abstract}


\IEEEpeerreviewmaketitle

\section{Introduction}\label{sec:intro}
Of the most common arithmetic circuits, multiplication consumes the greatest power and occupies the largest circuit area. As a result, binary multiplication has been the subject of significant academic and industrial research~\cite{Wallace1964AMultiplier,Dadda1965SomeMultipliers,Zimmermann2003OptimizedSum-of-products}. Amongst the most widely implemented multiplier architectures is the Booth Radix-4 multiplier~\cite{Ercegovac2004DigitalArithmetic}. In many applications, the requirement for exact multiplication can be dropped, and replaced with a faithful rounding requirement. A faithful rounding returns the machine representable number immediately above or below the infinitely precise result, unless the infinitely precise result can in fact be represented at the machine precision.

The additional freedom introduced by a faithful rounding, can be exploited, at the register transfer level (RTL), to improve multiplier power consumption and save circuit area. The standard approach to exploiting such freedom is to truncate the partial product array, as shown in Figure~\ref{fig:trunc_and_pic}. Unfortunately, the partial product array arising from a Booth Radix-4 encoding is not symmetric, as only one operand gets encoded. Applying truncation to a Booth Radix-4 multiplier, results in a non-commutative implementation~\cite{Huang2006AApplications,Chen2010AApplications}. Compiler optimizations routinely implicitly assume mathematical properties of underlying hardware. Application level correctness may implicitly require monotonicity of a complex function say, but may not have been considered during the hardware design. But commutativity is a far greater pervasive assumption, to the extent that compiler engineers could not conceive that non-commutative multipliers could even be built. Preserving commutativity significantly reduces compiler complexity for this most fundamental of operations.

\begin{figure}
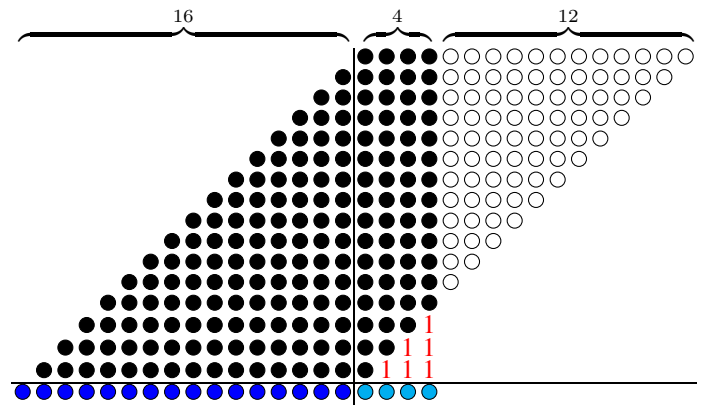

    \centering
    \newcommand\trun[1][]{\tikz\node[circle,draw,inner sep=0.2em]{};}
\newcommand\ppar[1][]{\tikz\node[circle,draw,fill,inner sep=0.2em]{};}
\newcommand\resu[1][]{\tikz\node[circle,draw,fill=blue,inner sep=0.2em]{};}
\newcommand\disc[1][]{\tikz\node[circle,draw,fill=cyan,inner sep=0.2em]{};}
\newcommand\one[1][]{\tikz\node[red,circle,inner sep=0.2em]{1};}
\setlength\tabcolsep{1pt}
\renewcommand{\arraystretch}{0.5}
\begin{tabular}{cccc cccc cccc cccc | cccc cccc cccc cccc}
\multicolumn{16}{c}{$\overbrace{\hspace{12.4em}}^{16}$}&\multicolumn{4}{c}{$\overbrace{\hspace{2.5em}}^4$}&\multicolumn{12}{c}{$\overbrace{\hspace{9.4em}}^{12}$}\\
&&&&&&&&&&&&&&&&\ppar&\ppar&\ppar&\ppar&\trun&\trun&\trun&\trun&\trun&\trun&\trun&\trun&\trun&\trun&\trun&\trun\\ 
&&&&&&&&&&&&&&&\ppar&\ppar&\ppar&\ppar&\ppar&\trun&\trun&\trun&\trun&\trun&\trun&\trun&\trun&\trun&\trun&\trun&\\ 
&&&&&&&&&&&&&&\ppar&\ppar&\ppar&\ppar&\ppar&\ppar&\trun&\trun&\trun&\trun&\trun&\trun&\trun&\trun&\trun&\trun&&\\ 
&&&&&&&&&&&&&\ppar&\ppar&\ppar&\ppar&\ppar&\ppar&\ppar&\trun&\trun&\trun&\trun&\trun&\trun&\trun&\trun&\trun&&&\\ 

&&&&&&&&&&&&\ppar&\ppar&\ppar&\ppar&\ppar&\ppar&\ppar&\ppar&\trun&\trun&\trun&\trun&\trun&\trun&\trun&\trun&&&&\\ 
&&&&&&&&&&&\ppar&\ppar&\ppar&\ppar&\ppar&\ppar&\ppar&\ppar&\ppar&\trun&\trun&\trun&\trun&\trun&\trun&\trun&&&&&\\ 
&&&&&&&&&&\ppar&\ppar&\ppar&\ppar&\ppar&\ppar&\ppar&\ppar&\ppar&\ppar&\trun&\trun&\trun&\trun&\trun&\trun&&&&&&\\ 
&&&&&&&&&\ppar&\ppar&\ppar&\ppar&\ppar&\ppar&\ppar&\ppar&\ppar&\ppar&\ppar&\trun&\trun&\trun&\trun&\trun&&&&&&&\\ 

&&&&&&&&\ppar&\ppar&\ppar&\ppar&\ppar&\ppar&\ppar&\ppar&\ppar&\ppar&\ppar&\ppar&\trun&\trun&\trun&\trun&&&&&&&&\\ 
&&&&&&&\ppar&\ppar&\ppar&\ppar&\ppar&\ppar&\ppar&\ppar&\ppar&\ppar&\ppar&\ppar&\ppar&\trun&\trun&\trun&&&&&&&&&\\ 
&&&&&&\ppar&\ppar&\ppar&\ppar&\ppar&\ppar&\ppar&\ppar&\ppar&\ppar&\ppar&\ppar&\ppar&\ppar&\trun&\trun&&&&&&&&&&\\ 
&&&&&\ppar&\ppar&\ppar&\ppar&\ppar&\ppar&\ppar&\ppar&\ppar&\ppar&\ppar&\ppar&\ppar&\ppar&\ppar&\trun&&&&&&&&&&&\\ 

&&&&\ppar&\ppar&\ppar&\ppar&\ppar&\ppar&\ppar&\ppar&\ppar&\ppar&\ppar&\ppar&\ppar&\ppar&\ppar&\ppar&&&&&&&&&&&&\\ 
&&&\ppar&\ppar&\ppar&\ppar&\ppar&\ppar&\ppar&\ppar&\ppar&\ppar&\ppar&\ppar&\ppar&\ppar&\ppar&\ppar&\color{red} 1&&&&&&&&&&&&\\ 
&&\ppar&\ppar&\ppar&\ppar&\ppar&\ppar&\ppar&\ppar&\ppar&\ppar&\ppar&\ppar&\ppar&\ppar&\ppar&\ppar&\color{red} 1&\color{red} 1&&&&&&&&&&&&\\ 
&\ppar&\ppar&\ppar&\ppar&\ppar&\ppar&\ppar&\ppar&\ppar&\ppar&\ppar&\ppar&\ppar&\ppar&\ppar&\ppar&\color{red} 1&\color{red} 1&\color{red} 1&&&&&&&&&&&&\\ 
\hline
\resu&\resu&\resu&\resu&\resu&\resu&\resu&\resu&\resu&\resu&\resu&\resu&\resu&\resu&\resu&\resu&\disc&\disc&\disc&\disc\\
\end{tabular}
    \caption{A 16-bit $a\times b$, implemented as a traditional partial product array, where each partial product bit is $a[i]\& b[j]$. From this array we truncate 12 columns and insert the compensation constant, 11 (red). Summing the truncated array (black) and discarding the light blue bits produces a faithfully rounded 16-bit multiplication result.}
    \label{fig:trunc_and_pic}
\end{figure}
In this work, we first demonstrate how, for minimal hardware overhead, the commutativity property of a truncated Booth implementation can be recovered. We then analytically derive tight error bounds on Booth array truncation and describe necessary and sufficient conditions to implement a faithful rounding. Lastly, we describe a procedure to construct efficient hardware implementations. The result is a fully parameterizable faithfully rounded multiplier design that exhibits better power and area than alternative approaches.

The paper is organized as follows. In Section~\ref{sec:background} we discuss prior work on hardware efficient multiplier, with a particular focus on works that exploit error freedom. In Section~\ref{sec:method} we derive compensation terms to recover commutativity and prove a set of mathematical bounds on truncation error. We then demonstrate how we can use this information to design an efficient faithfully rounded commutative Booth multiplier. In Section~\ref{sec:results} we compare synthesis results for a range of different implementations and discuss our approach to verifying these multiplier designs.

The paper contains the following novel contributions:
\begin{itemize}
    \item a precise description of the compensation hardware required to recover the commutativity property,
    \item proven mathematical bounds on the maximal error due to truncation of Booth partial product arrays,
    \item a procedure to construct faithfully rounded commutative truncated Booth multipliers with maximal truncation.
\end{itemize}

\section{Background}\label{sec:background}
\subsection{Binary Multiplication}
The most naive implementation of $n$-bit binary multiplication via primitive logic gates~\cite{Ercegovac2004DigitalArithmetic} consists of first forming a partial product (PP) array of $n^2$ PP bits, where each PP bit is the logical AND of two input bits. Next that PP array is reduced from $n$ rows to two rows via compressor cells arranged in a reduction tree. Once reduced to a summation of two rows, a full parallel adder can be deployed, usually to produce a $2n$-bit result. Decades of research has led to efficient implementations of all stages of this approach. Array reduction was studied by Wallace and Dadda~\cite{Wallace1964AMultiplier,Dadda1965SomeMultipliers} then more recently improved with timing driven compressor-tree construction~\cite{Oklobdzija1996AApproach,Zimmermann2009DatapathDesign}. Parallel addition in ASIC design is now most commonly implemented via parallel prefix structures~\cite{Zimmermann2009DatapathDesign,Zimmermann2003OptimizedSum-of-products}. In this work, we will be entirely focused upon the first step, the construction of the PP array, and rely on existing techniques for the efficient implementation of array reduction and parallel addition.

One of the most widely used PP array creation techniques, is to add a Booth encoding step, which groups bits together to reduce the number of rows in the PP array. As observed by Zimmerman~\cite{Zimmermann2009DatapathDesign}, the overhead introduced by the additional encoding step only offers a net benefit at larger bitwidths, beyond 16-bits. The two primary variants are the Booth Radix-4 and Radix-8 encoding schemes. This paper modifies the Booth Radix-4 multiplication method, so we will describe it in detail. We define the following encoding function taking three single bit operands:
\begin{equation}
    \benc(x,y,z) = -2x+y+z
\end{equation}
For an $n$-bit signed multiplication, Booth Radix-4 encoding halves the PP array height:
\begin{align}
    a\times b &= \sum_{i=0}^{n-1} 2^i a_i\times b \\
    &= \sum_{i=0}^{(n/2)-1} 4^{i} \benc(a_{2i+1},a_{2i},a_{2i-1})\times b, \label{eqn:booth_basic}
\end{align}
where $a_{-1}=0$ and we assume an even $n$. Each PP row in \eqref{eqn:booth_basic} can be efficiently implemented using primitive logic gates. Minor modifications are required for odd $n$ and unsigned multiplication~\cite{Ercegovac2004DigitalArithmetic}.

\subsection{Faithfully Rounded Binary Multiplication}
The implementation of binary multiplication where the full result is not required is most commonly achieved via truncation schemes~\cite{Drane2014OnArrays,Cho2004ErrorMultiplier,Ko2011DesignRounding}. These schemes follow a similar structure to compute $n$-bit $a\times b$. First, truncate the partial product array, removing the $k$ least significant columns, corresponding to an error value of $\Delta_k$. To the remaining array add a compensation term, $f(a,b)$, to the $k^{th}$ column and then perform standard array reduction. From this summation result a further $n-k$ columns can be truncated to recover a faithfully rounded multiplication. Early work in this domain, described as Constant Correction Truncated schemes (CCT), started from an AND array and considered constant $f(a,b)$~\cite{Schulte1993TruncatedDSP,Kidambi1996Area-efficientApplications}. An example of CCT is shown in Figure~\ref{fig:trunc_and_pic}. Later work introduced Variable Correction Truncation, where $f(a,b)$ was considered to be a function of the inputs~\cite{King1998Data-dependentMultipliers}. Other works considered linearization schemes~\cite{Petra2011DesignofFunction} and approximate carry predictions~\cite{Michard2006CarryMultiplication}. Booth array truncation has similarly seen CCT techniques applied along with a range of statistical methods to approximate the expected truncation error~\cite{Huang2006AApplications,Chen2010AApplications,Drane2014OnArrays}. These previous Booth truncation schemes have broken the commutativity property of multiplication, a property that, as we show in this work, can be recovered for minimal hardware overhead. 

Faithfully rounded truncated multipliers are most commonly applied in Digital Signal Processing (DSP) but also can be found in floating point multipliers, where a lower accuracy can be permitted~\cite{Wires2000Variable-correctionMultipliers}. For transcendental function approximations, it is rarely necessary to compute full precision multiplication, since we already have the approximation error to factor in~\cite{Orloski2023AutomaticArchitectures,DeCaro2017MinimizingRounding}.
\section{Methodology} \label{sec:method}
The cause of the non-commutativity of truncated Booth multiplier architectures stems from the asymmetry of how the multiplier and multiplicand are treated; namely that only one of the inputs is Booth encoded. Naturally to maintain the hardware benefits of Booth architectures and remain commutative, the solution is to Booth encode both inputs. For ease of exposition we will assume that both inputs are signed two's complement, have even bitwidth $n$. We focus on a Booth Radix-4 architecture. It is simple to extend the analysis presented here to odd $n$ and unsigned multiplication. Given these assumptions, \textit{double} Booth encoding results in the following:
\begin{align} \label{doublebooth}
a\times b &= \sum_{i,j=0}^{(n/2)-1} 4^{i+j} PP_{i,j}\\
PP_{i,j} & = \benc(a_{2i+1},a_{2i},a_{2i-1}) \times \benc(b_{2j+1},b_{2j},b_{2j-1}),
\end{align}
where $a_{-1}=b_{-1}=0$. The key observation to achieving commutative truncated Booth architectures is to truncate \eqref{doublebooth} directly. Ultimately we will target creating a faithfully rounded result returning the most significant $n$ bits. By truncating $k$ columns, a portion of the summation, $\Delta$, will be deleted and the remainder, $M$, will be implemented as a partial-product array such that $a \times b = M + \Delta$.
\begin{align} \label{doubleboothtrunc}
M &= \sum_{i,j=0, i+j\geq k/2}^{(n/2)-1} 4^{i+j} PP_{i,j} \nonumber \\ 
\Delta &= \sum_{i,j=0, i+j < k/2}^{(k/2)-1} 4^{i+j} PP_{i,j} 
\end{align}
Note that the proposed architecture and analysis assumes $k<n$ such that $\Delta$ is strictly triangular in shape. (Hence the proposed architecture can be extended to return a faithfully rounded result with $m$ bits as long as $m>n$.) The first challenge is how $M$ is transformed into a binary array and the second is analyzing the error of $\Delta$.

\subsection{Commutative Truncated Booth Arrays}\label{subsec:compensation}

To begin reducing $M$ to a binary PP array, we can undo the Booth encoding to one of the inputs:
\begin{align} \label{reduceM}
M &= \sum_{i=0}^{(n/2)-1} 4^i \benc(a_{2i+1},a_{2i},a_{2i-1}) \times bb_i \\
bb_i&=(-2^{n-k+2i-1}b_{n-1}+b_{n-2:k-2i}+b_{k-2i-1}),\nonumber
\end{align}
where $b_j=0$ if $j<0$ and $b_{n:m}$ denotes the bit slice of $b[n:m]$. Now the terms in the summation only differ from a standard Booth Radix-4 summation due to the presence of the $b_{k-2i-1}$ term. Let us consider a standard Booth Radix-4 PP row $row_i$ and contrast this with one of the terms in \eqref{reduceM}, $row_i'$. Such rows are of the form (for some $c_i$):

\begin{align} \label{pps}
row_i  &=  \benc(a_{2i+1},a_{2i},a_{2i-1}) (-2^{n-2}c_{n-1}+c_{n-2:1}) \nonumber \\ 
row_i' &=  \benc(a_{2i+1},a_{2i},a_{2i-1}) (-2^{n-2}c_{n-1}+c_{n-2:1}+c_0) 
\end{align}
Now the bit level construction of $row_i$ and $row_i'$ can be found in Table~\ref{tab:pp_creation} (note that for two's complement $c$ and bit $c_0$, $-c-c_0=\overline{c}+1-c_0=\overline{c}+\overline{c_0}$).
\begin{table*}
\begin{center}
\caption{PARTIAL PRODUCT CREATION FOR COMMUTATIVE TRUNCATED BOOTH RADIX-4 ARRAY}\label{tab:pp_creation}
\begin{tabular}{|c|c|c|c|c|c|}
\hline
$a_{2i+1}$ & $a_{2i}$ & $a_{2i-1}$ & $-2a_{2i+1}+a_{2i}+a_{2i-1}$  & $row_i=pp_i+s_i$ & $row_i'=pp_i'+s_i'$ \\
\hline
0 & 0 & 0 & 0 & 0 & 0 \\
0 & 0 & 1 & 1 & $\{c_{n-1},c_{n-1} ... c_2,c_1\}+0$ & $\{c_{n-1},c_{n-1} ... c_2,c_1\}+c_0$ \\
0 & 1 & 0 & 1 & $\{c_{n-1},c_{n-1} ... c_2,c_1\}+0$ & $\{c_{n-1},c_{n-1} ... c_2,c_1\}+c_0$ \\
0 & 1 & 1 & 2 & $\{c_{n-1},c_{n-2} ... c_1,0\}+0$ & $\{c_{n-1},c_{n-2} ... c_1,c_0\}+c_0$ \\
1 & 0 & 0 & -2 & $\{\overline{c_{n-1}},\overline{c_{n-2}} ... \overline{c_1},1\}+1$ & $\{\overline{b_{n-1}},\overline{c_{n-2}} ... \overline{c_1},\overline{c_0}\}+\overline{c_0}$ \\
1 & 0 & 1 & -1 & $\{\overline{c_{n-1}},\overline{c_{n-1}} ... \overline{c_0},\overline{c_0}\}+1$ & $\{\overline{b_{n-1}},\overline{c_{n-1}} ... \overline{c_2},\overline{c_1}\}+\overline{c_0}$ \\
1 & 1 & 0 & -1 & $\{\overline{c_{n-1}},\overline{c_{n-1}} ... \overline{c_0},\overline{c_0}\}+1$ & $\{\overline{b_{n-1}},\overline{c_{n-1}} ... \overline{c_2},\overline{c_1}\}+\overline{c_0}$ \\
1 & 1 & 1 & 0 & 0 & 0 \\
\hline
\end{tabular}
\end{center}
\end{table*}

Note that $row_i$ and $row_i'$ can both be expressed in the form integer $pp$ plus bit $s$. Moreover truncated $pp_i$ has identical values to truncated $pp_i'$, the key and only difference between a truncated Booth Radix-4 array and commutative array are the $s_i'$ bits. The Boolean expressions for $s_i$ and $s_i'$, derivable from Table~\ref{tab:pp_creation} are:
\begin{align} \label{sbits}
s_i &=  a_{2i+1} \& \overline{a_{2i} \& a_{2i-1}} \nonumber \\ s_i' &= a_{2i+1} \& \overline{a_{2i} \& a_{2i-1}} \& (c_0 \oplus a_{2i+1}) \end{align}
We can now present the steps in constructing a commutative truncated Booth Radix-4 array:
\begin{enumerate}
    \item Construct standard Booth Radix-4 array
    \item Remove least significant $k$ columns, $k$ is even
    \item  In column $k$, for $i \in [0,\frac{k}{2}-1]$, add additional bits 
\begin{equation*}
    s_i' = a_{2i+1} \& \overline{a_{2i} \& a_{2i-1}} \& (b_{k-2i-1} \oplus a_{2i+1}).
\end{equation*}
\end{enumerate}
The difference between a commutative and a non-commutative array is just the inclusion of these $k/2$ bits.
\begin{figure}
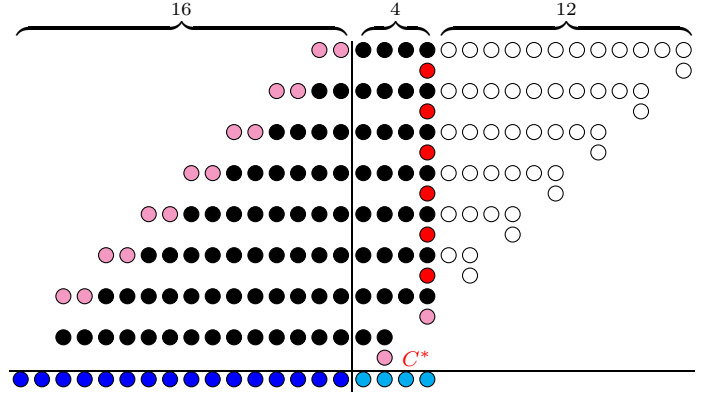

    \centering
    \newcommand\trun[1][]{\tikz\node[circle,draw,inner sep=0.2em]{};}
\newcommand\ppar[1][]{\tikz\node[circle,draw,fill,inner sep=0.2em]{};}
\newcommand\comp[1][]{\tikz\node[circle,draw,fill=red,inner sep=0.2em]{};}
\newcommand\corr[1][]{\tikz\node[circle,draw,fill,inner sep=0.2em]{};}
\newcommand\sign[1][]{\tikz\node[circle,draw,fill=magenta!50,inner sep=0.2em]{};}
\newcommand\resu[1][]{\tikz\node[circle,draw,fill=blue,inner sep=0.2em]{};}
\newcommand\disc[1][]{\tikz\node[circle,draw,fill=cyan,inner sep=0.2em]{};}
\newcommand\one[1][]{\tikz\node[red,circle,inner sep=0.2em]{1};}
\setlength\tabcolsep{1pt}
\renewcommand{\arraystretch}{0.5}
\begin{tabular}{cccc cccc cccc cccc | cccc cccc cccc cccc}
\multicolumn{16}{c}{$\overbrace{\hspace{12.4em}}^{16}$}&\multicolumn{4}{c}{$\overbrace{\hspace{2.5em}}^4$}&\multicolumn{12}{c}{$\overbrace{\hspace{9.4em}}^{12}$}\\
&&&&&&&&&&&&&&\sign&\sign&\ppar&\ppar&\ppar&\corr&\trun&\trun&\trun&\trun&\trun&\trun&\trun&\trun&\trun&\trun&\trun&\trun\\ 
&&&&&&&&&&&&&&&&&&&\comp&&&&&&&&&&&&\trun\\ 
&&&&&&&&&&&&\sign&\sign&\ppar&\ppar&\ppar&\ppar&\ppar&\corr&\trun&\trun&\trun&\trun&\trun&\trun&\trun&\trun&\trun&\trun&&\\ 
&&&&&&&&&&&&&&&&&&&\comp&&&&&&&&&&\trun&&\\ 

&&&&&&&&&&\sign&\sign&\ppar&\ppar&\ppar&\ppar&\ppar&\ppar&\ppar&\corr&\trun&\trun&\trun&\trun&\trun&\trun&\trun&\trun&&&&\\ 
&&&&&&&&&&&&&&&&&&&\comp&&&&&&&&\trun&&&&\\ 
&&&&&&&&\sign&\sign&\ppar&\ppar&\ppar&\ppar&\ppar&\ppar&\ppar&\ppar&\ppar&\corr&\trun&\trun&\trun&\trun&\trun&\trun&&&&&&\\ 
&&&&&&&&&&&&&&&&&&&\comp&&&&&&\trun&&&&&&\\ 

&&&&&&\sign&\sign&\ppar&\ppar&\ppar&\ppar&\ppar&\ppar&\ppar&\ppar&\ppar&\ppar&\ppar&\corr&\trun&\trun&\trun&\trun&&&&&&&&\\ 
&&&&&&&&&&&&&&&&&&&\comp&&&&\trun&&&&&&&&\\ 
&&&&\sign&\sign&\ppar&\ppar&\ppar&\ppar&\ppar&\ppar&\ppar&\ppar&\ppar&\ppar&\ppar&\ppar&\ppar&\corr&\trun&\trun&&&&&&&&&&\\ 
&&&&&&&&&&&&&&&&&&&\comp&&\trun&&&&&&&&&&\\ 

&&\sign&\sign&\ppar&\ppar&\ppar&\ppar&\ppar&\ppar&\ppar&\ppar&\ppar&\ppar&\ppar&\ppar&\ppar&\ppar&\ppar&\ppar&&&&&&&&&&&&\\ 
&&&&&&&&&&&&&&&&&&&\sign&&&&&&&&&&&&\\ 
&&\ppar&\ppar&\ppar&\ppar&\ppar&\ppar&\ppar&\ppar&\ppar&\ppar&\ppar&\ppar&\ppar&\ppar&\ppar&\ppar&&&&&&&&&&&&&&\\ 
&&&&&&&&&&&&&&&&&\sign&\multicolumn{2}{c}{\color{red} \footnotesize$C^*$}&&&&&&&&&&&\\ 
\hline
\resu&\resu&\resu&\resu&\resu&\resu&\resu&\resu&\resu&\resu&\resu&\resu&\resu&\resu&\resu&\resu&\disc&\disc&\disc&\disc\\
\end{tabular}
    \caption{A 16-bit commutative truncated Booth multiplier, with 12 columns of truncation. The six red bits are the additional compensation bits $s_i'$. The pink bits represent the typical Booth sign bits $s_i$.} 
    \label{fig:compensated_booth}
\end{figure}
Since these $s'$ bits increase the array height of the least significant column, it is natural to ask whether this will affect the critical path.

Consider any array reduction that performs the summation of $m$ addends of large length $k$. The greatest number of carries that the summation generates will occur when the addends are maximal, evaluating $m(2^k-1) = (m-1)2^k+(2^k-m)$. For sufficiently large $k$, $2^k-m>0$ and hence such a summation will generate at most $m-1$ carries. This means the implementations are capable of dealing with $m-1$ carries and hence accepting $m-1$ additional carry-in bits in its least significant column without significantly altering its delay characteristics.

The maximum array height in Figure~\ref{fig:compensated_booth} is $\approx n/2$, the number of additional carry ins such an array can handle is $<n/2$, there are $k/2$ additional $s_i'$ bits which is by assumption on $k$, $<n/2$. Hence such commutative truncated Booth Radix-4 arrays are expected to exhibit minimal delay differences when compared to untruncated arrays.

It is important to note that while there are but $k/2$ $s'$ bits additional bits required to make the truncated Booth Radix-4 array commutative and such inclusion is expected to have limited delay impact; these bits are in \textit{no way} trivial. Omitting any one of them is likely to still produce a faithfully rounded implementation, but it will not be commutative. These $s'$ bits are not present in the original Booth Radix-4 array and any truncation or promotion of bits within the array will \textit{not} result in a commutative array. Such compensation bits would need to be rederived for every Booth architecture variant. Moreover commutativity was achieved by truncating \eqref{doublebooth}, not standard Booth summation formulae.

\subsection{The Truncation Error}

In order to create optimal faithfully rounded multipliers, the value range of $\Delta$ must be precisely known. This analysis will be facilitated by the following helper functions, representing hexadecimal summations:

\begin{center}
\scalebox{0.875}{
\begin{tabular}{rrrr}
\multicolumn{1}{l}{$X_n=$} & \multicolumn{1}{l}{$Y_n=$} & \multicolumn{1}{l}{$Z_n=$} & \multicolumn{1}{l}{$W_n=$}\\
$\overbrace{\texttt{222..222}}^n$ & $\overbrace{\texttt{222..222}}^n$ & $\overbrace{\texttt{000..000}}^n$ & $\overbrace{\texttt{444..444}}^n$ \\
                    & \texttt{+444..444}& \texttt{+444..444} & \texttt{+444..444} \\
\texttt{+444..440} & \texttt{+444..440} & \texttt{+444..440} & \texttt{+444..440} \\
\texttt{+444..440} & \texttt{+444..440} & \texttt{+444..440} & \texttt{+444..440} \\
\texttt{+444..400} & \texttt{+444..400} & \texttt{+444..400} & \texttt{+444..400} \\
\texttt{+444..400} & \texttt{+444..400} & \texttt{+444..400} & \texttt{+444..400} \\
           $\cdots$ & $\cdots$            & $\cdots$            & $\cdots$            \\
\texttt{+440..000} & \texttt{+440..000} & \texttt{+440..000} & \texttt{+440..000} \\
\texttt{+440..000} & \texttt{+440..000} & \texttt{+440..000} & \texttt{+440..000} \\
\texttt{+400..000} & \texttt{+400..000} & \texttt{+400..000} & \texttt{+400..000} \\
\texttt{+400..000} & \texttt{+400..000} & \texttt{+400..000} & \texttt{+400..000} \\
\end{tabular}
}
\end{center}
These hexadecimal summations can be simplified. Consider reducing $X_n$:
\begin{center}
\setlength\extrarowheight{4pt}
\scalebox{0.875}{
\begin{tabular}{rrr}
$\overbrace{\texttt{222..222}}^n$ & $=\frac{2}{15} \texttt{FFF..FFF}$ & $=\hspace{1.5em}\frac{2}{15} \overbrace{\texttt{0FFF..FFF}}^{n+1}$ \\
\texttt{+444..440} & $-\frac{8}{15} \texttt{111..110}$ & $-\hspace{1.5em}\frac{8}{15^2} \texttt{0FFF..FF0}$ \\
\texttt{+444..440} & $+\frac{8}{15} \texttt{111..110}$ & $+\frac{8(n-1)}{15^2} \texttt{1000..000}$ \\
\texttt{+444..400} & $+\frac{8}{15} \texttt{FFF..FF0}$ &  \\
\texttt{+444..400} & $+\frac{8}{15} \texttt{FFF..F00}$ &  \\
           $\cdots$ & $\cdots$                           &  \\
\texttt{+440..000} & $+\frac{8}{15} \texttt{FF0..000}$ &  \\
\texttt{+440..000} &  &  \\
\texttt{+400..000} & $+\frac{8}{15} \texttt{F00..000}$ & \\
\texttt{+400..000} & &
\end{tabular}
}
\end{center}
Therefore,
\begin{align*}
    X_n &= \frac{2}{15}(16^n-1) - \frac{8}{15^2}(16^n-16) + \frac{8}{15}(n-1)16^n\\
        &= \frac{2}{225}(16^n(60n-49)+49).
\end{align*}
Similarly the remaining helper functions can be reduced to:
\setlength\extrarowheight{0pt}
\begin{align*}
Y_n &= \frac{2}{225}(16^n(60n-19)+19) \\
Z_n &= \frac{4}{225}(16^n(30n-17)+17) \\
W_n &= \frac{8}{225}(16^n(15n-1)+1)
\end{align*}

\begin{lemma}\label{lem:intersect_abstr}
Extremal values of $\Delta$ occur when $a_{i}\neq a_{i-2}$ and $b_{j}\neq b_{j-2}$ for all $n>i,j>1$.
\end{lemma}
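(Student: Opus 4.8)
The plan is to exploit that $\Delta$ is \emph{bilinear} in the bit-vectors $\mathbf a=(a_0,\dots,a_{k-1})$ and $\mathbf b=(b_0,\dots,b_{k-1})$ on which it depends. Since $\benc(x,y,z)=-2x+y+z$ is linear in each argument and every $PP_{i,j}$ multiplies an $a$-dependent factor by a $b$-dependent factor, the sum \eqref{doubleboothtrunc} defining $\Delta$ contains no monomial quadratic in the $a$-bits alone nor in the $b$-bits alone; hence $\Delta(\mathbf a,\mathbf b)=\mathbf a^{\top}Q\,\mathbf b$ for an integer matrix $Q$, and in particular $\Delta$ is affine in each individual bit when the others are held fixed. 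It therefore suffices to produce one maximizing (resp.\ minimizing) configuration of the claimed alternating form $a_i\neq a_{i-2}$, $b_j\neq b_{j-2}$.

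The principal tool is the identity obtained by undoing one Booth encoding inside $\Delta$, exactly as \eqref{reduceM} was obtained from $M$:
\[
  \Delta=\sum_{i=0}^{k/2-1}4^{i}\,\benc(a_{2i+1},a_{2i},a_{2i-1})\;t^{b}_{k/2-1-i},\qquad
  t^{b}_{\ell}:=\sum_{j=0}^{\ell}4^{j}\,\benc(b_{2j+1},b_{2j},b_{2j-1}),
\]
together with the symmetric identity with $a$ and $b$ interchanged; here $t^{b}_{\ell}$ is just the two's complement value of the low $2\ell+2$ bits of $b$, so $|t^{b}_{\ell}|\leq 2\cdot 4^{\ell}$. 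Reading off the coefficient of an $a$-bit in the affine form $\Delta(\cdot,\mathbf b)$: the coefficient of $a_{2i}$ is $4^{i}t^{b}_{k/2-1-i}$, while the coefficient of $a_{2i+1}$ --- which occurs in $\benc(\cdot)$ at both index $i$ and index $i+1$ --- is $4^{i}\bigl(4\,t^{b}_{k/2-2-i}-2\,t^{b}_{k/2-1-i}\bigr)$. Because $t^{b}_{k/2-1-i}-t^{b}_{k/2-2-i}$ equals $4^{k/2-1-i}$ times a single Booth digit of $b$ and $|t^{b}_{k/2-2-i}|<4^{k/2-1-i}$, the sign of this second coefficient is governed entirely by that top Booth digit.

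With the coefficient formulas in hand I would close by a local exchange argument at an extremal configuration $(\mathbf a,\mathbf b)$. If it violates alternation, say $a_i=a_{i-2}$ for some $1<i<n$, then using the coefficients above --- and, in the odd case, the fact that flipping $a_{2i+1}$ perturbs the two consecutive Booth digits indexed $i$ and $i+1$ simultaneously --- one shows the offending bit may be adjusted, re-choosing at most an adjacent bit, without decreasing $\Delta$; iterating drives first $\mathbf a$ and then $\mathbf b$ to a fully alternating configuration, and the minimum is obtained by running the same steps with the inequalities reversed. Since alternating configurations are few (they are fixed once the free low bits $(a_0,a_1)$ and $(b_0,b_1)$ are chosen, the remaining bits being forced), evaluating $\Delta$ on them gives the closed-form extremal values expressed through $X_n,Y_n,Z_n,W_n$, which is what the following faithful-rounding analysis consumes. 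An equivalent route that sidesteps the exchange argument is to solve the inner maximization $\max_{\mathbf b}\Delta(\mathbf a,\mathbf b)$ outright: the coefficient formulas express the optimal $b$-bits as explicit Boolean functions of the Booth digits and the prefix signs of $a$, leaving a convex piecewise-linear function of $\mathbf a$ to be maximized over the cube.

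The step I expect to be the real obstacle is making this rigorous in the presence of the \emph{shared odd bits}. Because the Booth digits indexed $i$ and $i+1$ both depend on $a_{2i+1}$, with opposite weights $-2$ and $+1$, the digits cannot be optimized independently, and a naive bit-by-bit best response genuinely stalls at non-alternating local optima --- for instance $a_{2i}=a_{2i-2}=1$ can be locally optimal yet far from globally extremal. Excluding such configurations globally is precisely where the magnitude bound $|t^{b}_{\ell}|\leq 2\cdot 4^{\ell}$ and the nested structure of the prefixes $t^{b}_{0},t^{b}_{1},\dots$ of $b$ do the work: together they fix the sign pattern of the bit-coefficients across consecutive columns, and that is what forces the alternation.
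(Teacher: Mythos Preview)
Your bilinear framing and the coefficient formulas for the $a$-bits are sound, and the identity
\[
\Delta=\sum_{i=0}^{k/2-1}4^{i}\,\benc(a_{2i+1},a_{2i},a_{2i-1})\,t^{b}_{k/2-1-i}
\]
is correct and useful. But the route you propose is not the paper's, and the place you yourself flag as the obstacle is a genuine gap. The local-exchange step (``the offending bit may be adjusted, re-choosing at most an adjacent bit, without decreasing $\Delta$'') is exactly what you then admit can stall at non-alternating local optima because the odd bit $a_{2i+1}$ is shared between two Booth digits with opposite weights. Your sign claim ``governed entirely by that top Booth digit'' also fails when that digit is zero, so the monotone ordering of bit-coefficients you need for a clean greedy/exchange argument is not available without further case work. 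As written, the proposal is a plan for an argument rather than an argument, and the hard step has not been discharged.

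The paper avoids this difficulty altogether by a dichotomy-and-compare strategy rather than exchange. It first argues, via the contribution of $a_0,a_2$ and symmetry in $b$, that an extremal configuration must propagate one of two patterns throughout: either $a_i=a_{i-2}$ and $b_j=b_{j-2}$ for all admissible indices, or $a_i\neq a_{i-2}$ and $b_j\neq b_{j-2}$ for all admissible indices. Both families are finite (fixed once the low two bits of each operand are chosen), so $\Delta$ is computed in closed form for every member --- the ``equal'' family directly, the ``alternating'' family via the helper sums $X_n,Y_n,Z_n,W_n$ --- and the two families are compared by their leading $k\,2^{k}$ coefficients: $2/5$ for the alternating family versus $1/6$ and $1/4$ for the equal family. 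That numeric comparison, not a local-move argument, is what settles the lemma. If you want to repair your approach, the cleanest fix is to keep your bilinear setup but replace the exchange step with this same two-family enumeration and coefficient comparison; your ``solve the inner maximization outright'' alternative is closer in spirit to this and is the one worth pursuing.
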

\begin{proof}
Consider the contribution of $a_0$ and $a_2$ in $\Delta$:
\begin{align*}
\Delta = & (-2^{k-1}b_{k-1}+b_{k-2:0})a_0 \\
&+4(-2^{k-3}b_{k-3}+b_{k-4:0})a_2+... \\
= & -2^{k-1}(b_{k-1}a_0+b_{k-3}a_2) \\
& +2^{k-2}(b_{k-2}a_0+b_{k-4}a_2) + ... \\
& +4(b_2 a_0 + b_0 a_2) + ...
\end{align*}
For $a_0=a_2=1$, extremal $\Delta$ occurs when $b_{k-1:2}=b_{k-3:0}$. If $a_0=a_2=0$ made $\Delta$ extremal this would imply $b_{k-1:2}=b_{k-3:0}$. But if $a_0=a_2$ implies $b_{k-1:2}=b_{k-3:0}$ and hence $b_2=b_0$ then by a symmetric argument this would imply $a_{k-1:2}=a_{k-3:0}$. Up to input reordering, this implies the following extremal inputs:

\begin{table}[!h]
\begin{center}
\caption{$\Delta$ WORST CASE INPUTS $a_i=a_{i-2}$ AND $b_j=b_{j-2}$}\label{tab:worstdelateq}
\setlength\extrarowheight{4pt}
\scalebox{0.875}{
\begin{tabular}{|r|r|c|}
\hline
$a$ & $b$ & $\Delta$ \\
\hline
\texttt{}{111...111} & \texttt{111...111} & $1$ \\
\texttt{111...111} & \texttt{101...010} & $\frac{2^k+2}{3}$ \\
\texttt{111...111} & \texttt{010...101} & $-\frac{2^k-1}{3}$ \\
\texttt{101...010} & \texttt{101...010} & $\frac{1}{18}((3k+16)2^k+16)$ \\
\texttt{101...010} & \texttt{010...101} & $-k2^{k-2}-\frac{1}{3}(2^k-1)$ \\
\texttt{010...101} & \texttt{010...101} & $\frac{1}{18}((3k-2)2^k+2)$ \\
\texttt{000...000} & \texttt{000...000} & $0$ \\
\hline
\end{tabular}
}
\end{center}
\end{table}

Alternatively if $a_0 \neq a_2$, swapping the value of $a_0$ with $a_2$ will produce the largest change to $\Delta$ if $b_{k-1:2} \neq b_{k-3:0}$. But if $a_0 \neq a_2$ implies $b_{k-1:2} \neq b_{k-3:0}$ and hence $b_2 \neq b_0$ then by a symmetric argument this would imply $a_{k-1:2} \neq a_{k-3:0}$.

As an example, consider $k/2$ is odd and $a=b=10011001...100110$:

\begin{align*}
\Delta &= \sum_{i,j=0, i+j < k/2}^{(k/2)-1} 4^{i+j} PP_{i,j} \\
&= -\sum_{i,j=0, i+j < k/2}^{(k/2)-1} (-4)^{i+j+1} \\
&= Z_{(k+2)/4} - 4W_{(k-2)/4} \\
&= \frac{2}{25} (2^k(5k+2)+2)
\end{align*}
Similarly, the helper functions can be used to simplify $\Delta$ for all the cases where $a_{i}\neq a_{i-2}$ and $b_{j}\neq b_{j-2}$ for all $n>i,j>1$, four of these cases are:

\begin{table}[!h]
\begin{center}
\caption{$\Delta$ WORST CASE INPUTS $a_i \neq a_{i-2}$ AND $b_j \neq b_{j-2}$}\label{tab:worstdelatneq}
\setlength\extrarowheight{4pt}
\scalebox{0.875}{
\begin{tabular}{|c|l|l|c|}
\hline
$k/2$ & $a$ & $b$ & $\Delta$ \\
\hline
even & \texttt{10011001...1001} & \texttt{01100110...0110} & $4Y_{\frac{k}{4}}-X_{\frac{k}{4}}$ \\
odd  & \texttt{10011001...10}   & \texttt{10011001...10}   & $Z_{\frac{k+2}{4}}-4W_{\frac{k-2}{4}}$ \\
even & \texttt{01100110...0110} & \texttt{01100110...0110} & $Z_{\frac{k}{4}}-4W_{\frac{k}{4}}$ \\
odd  & \texttt{10011001...10}   & \texttt{01100110...01}   & $-X_{\frac{k+2}{4}}+4Y_{\frac{k-2}{4}}$ \\
\hline
\end{tabular}
}
\end{center}
\end{table}
The remaining cases produce less extremal $\Delta$ values. These four cases can be combined and simplified using the helper function to conclude:

\begin{align} \label{eq:deltarange}
-\frac{1}{25} (2^k(10k+5(-1)^{k/2}&-1)-5+(-1)^{k/2} \nonumber \\
\leq \Delta & \leq \nonumber \\
\frac{1}{25} (2^k(10k-5(-1)^{k/2}&-1)+5+(-1)^{k/2}
\end{align}

Note that the leading coefficient of $k2^k$ in these bounds are $2/5$ versus $1/6$ and $1/4$ in Table~\ref{tab:worstdelateq} . Conclude that the extremal values of $\Delta$ occur when $a_{i}\neq a_{i-2}$ and $b_{j}\neq b_{j-2}$ for all $n>i,j>1$. The worst case input vectors can be found in Table~\ref{tab:worstdelatneq} and \eqref{eq:deltarange} contains tight bounds on $\Delta$.
\end{proof}

\subsection{Faithfully Rounded Commutative Truncated Booth Arrays}

The optimal truncation scheme will truncate the most number of columns while maintaining the faithful rounding condition. The truncated array $M$ will have an additional constant $C$ added to compensate for the loss of $\Delta$. Once summed, the least significant bits (value $D$) will be truncated before returning the approximation.
\begin{align*}
&y = a \times b = M + \Delta \quad \text{Unrounded Result}& \\
&y' = M + C - D  \quad \text{Rounded Result}& \\
& |y - y'| < 2^n \quad \text{Faithful Rounding Condition}& \\
&\Rightarrow C-D-2^n < \Delta < C-D+2^n
\end{align*}
$D$ can take any value between $0$ and $2^n-2^k$, hence necessary and sufficient condition for faithful rounding is:
\begin{align*}
C-2^n < &\Delta < C+2^k \\
\max(\Delta)-2^k < &C < \min(\Delta) + 2^n
\end{align*}
Now $C$ is a multiple of $2^k$, hence:
\begin{align*}
\left \lfloor \frac{\max(\Delta)}{2^k} \right \rfloor \leq  \left \lfloor \frac{\min(\Delta)}{2^k} \right \rfloor + 2^{n-k}
\end{align*}
Substituting the extremal found in \eqref{eq:deltarange} and simplifying results in:
\begin{align} \label{eq:krange}
\left \lfloor \frac{2k-(-1)^{k/2}}{5} \right \rfloor + \left \lceil \frac{2k+(-1)^{k/2}}{5} \right \rceil < 2^{n-k}
\end{align}
Maximising $k$ in \eqref{eq:krange} generates the optimal truncation value. The optimal truncation scheme values can now be presented:
\begin{align}
& k^* = \max_{\text{even } k} \left(k\leq5 \times 2^{n-k-2} \right) \\
& \frac{2k^*-5-(-1)^{\frac{k^*}{2}}}{5} \leq C^* \leq \frac{5 \times 2^{n-k^*}-2k^*-(-1)^{\frac{k^*}{2}}}{5}
\end{align}
Example values of $k^*$ and $C^*$:
\begin{center}
\setlength\extrarowheight{4pt}
\scalebox{0.875}{
\begin{tabular}{|r|r|r|r|}
\hline
$n$ & $k^*$ & $\min C^*$ & $\max C^*$ \\
\hline
8 & 4 & 1 & 14 \\
16 & 12 & 4 & 11 \\
24 & 20 & 7 & 7 \\
32 & 26 & 10 & 53 \\
53 & 46 & 18 & 109 \\
64 & 58 & 23 & 41 \\
\hline
\end{tabular}
}
\end{center}

We can now present the steps in constructing a commutative truncated Booth Radix-4 array for an $n$ multiplication returning a faithful rounding of the $n$ most significant bits:
\begin{enumerate}
    \item Construct standard Booth Radix-4 array, Booth encoding $a$
    \item Calculate $k^* = \max_{\text{even } k} (k\leq5 \times 2^{n-k-2} )$
    \item Remove least significant $k^*$ columns
    \item  In column $k^*$, for $i \in [0,\frac{k^*}{2}-1]$, add additional bits 
\begin{equation*}
    s_i' = a_{2i+1} \& \overline{a_{2i} \& a_{2i-1}} \& (b_{k^*-2i-1} \oplus a_{2i+1}).
\end{equation*}
    \item In column $k^*$, include constant $C^*$ which can be any value in the range:
\begin{equation*}
    \left[ \frac{2k^*-5-(-1)^{\frac{k^*}{2}}}{5} , \frac{5 \times 2^{n-k^*}-2k^*-(-1)^{\frac{k^*}{2}}}{5} \right]
\end{equation*}
    \item Sum the array
    \item Remove the least significant $n$ columns
\end{enumerate}


\begin{table*}
    \centering
    \caption{Synthesis results for three competing architectures at several delay targets, across four different bitwidths, $n$. The percentage improvements are with respect to the baseline implementation. We highlight the best result in each row in bold.}
    
    \begin{tabular}{|cc|rr|rr|rr|}
    \hline
         & & \multicolumn{2}{c|}{Baseline} & \multicolumn{2}{c|}{Truncated AND~\cite{Drane2014OnArrays}} & \multicolumn{2}{c|}{Commutative Truncated Booth} \\
         $n$ & Delay (ns) & Area ($\mu m^2$) & Power ($\mu W$) & Area ($\mu m^2$) & Power ($\mu W$) & Area ($\mu m^2$) & Power ($\mu W$) \\   
         \hline
         \multirow{4}{*}{16}
         &0.175& 74.0 & 450 & \best{64.4} (\textit{-13.0\%}) & \best{319} (\textit{-29.2\%}) & 69.3 (\textit{- 6.4\%}) & 389 (\textit{-13.7\%})\\
         &0.2  & 56.1 & 346 & \best{46.4} (\textit{-17.4\%}) & \best{252} (\textit{-27.3\%}) & 48.3 (\textit{-13.9\%}) & 286 (\textit{-17.5\%})\\
         &0.225& 51.7 & 309 & \best{40.9} (\textit{-20.8\%}) & \best{215} (\textit{-30.4\%}) & 43.6 (\textit{-15.8\%}) & 252 (\textit{-18.5\%})\\
         &0.25 & 47.5 & 283 & \best{37.1} (\textit{-21.9\%}) & \best{198} (\textit{-30.1\%}) & 39.7 (\textit{-16.3\%}) & 226 (\textit{-20.1\%})\\

         \hline
         \multirow{4}{*}{24} 
         &0.225& 127.7 & 815 & 108.6 (\textit{-14.9\%}) & 622 (\textit{-23.7\%}) & \best{99.9} (\textit{-21.8\%}) & \best{600} (\textit{-26.4\%})\\
         &0.25 & 113.4 & 705 & 101.5 (\textit{-10.4\%}) & 575 (\textit{-18.4\%}) & \best{88.7} (\textit{-21.8\%}) & \best{516} (\textit{-26.8\%})\\
         &0.275& 108.0 & 653 & 93.5  (\textit{-13.4\%}) & 502 (\textit{-23.2\%}) & \best{84.9} (\textit{-21.3\%}) & \best{498} (\textit{-23.8\%})\\
         &0.3  & 99.2  & 595 & 83.8  (\textit{-15.6\%}) & 466 (\textit{-21.8\%}) & \best{77.3} (\textit{-22.1\%}) & \best{447} (\textit{-25.0\%})\\

         \hline
         \multirow{4}{*}{32} 
         &0.275& 193.7 & 1225 & 161.3 (\textit{-16.7\%}) & 897 (\textit{-26.7\%}) & \best{156.2} (\textit{-19.4\%}) & \best{914} (\textit{-25.4\%})\\
         &0.3  & 171.5 & 1097 & 156.3 (\textit{- 8.8\%}) & 861 (\textit{-21.5\%}) & \best{148.0} (\textit{-13.7\%}) & \best{862} (\textit{-21.4\%})\\
         &0.325& 164.3 & 1024 & 139.7 (\textit{-15.0\%}) & 787 (\textit{-23.1\%}) & \best{134.2} (\textit{-18.3\%}) & \best{768} (\textit{-25.0\%})\\
         &0.35 & 152.9 & 945  & 137.2 (\textit{-10.3\%}) & 772 (\textit{-18.3\%}) & \best{130.7} (\textit{-14.5\%}) & \best{748} (\textit{-20.8\%})\\

         \hline
         \multirow{4}{*}{64} 
         &0.3  & 823.0 & 5548 & 647.6 (\textit{-21.3\%}) & 3808 (\textit{-31.4\%}) & \best{566.8} (\textit{-31.1\%}) & \best{3527} (\textit{-36.4\%})\\
         &0.325& 745.5 & 4886 & 593.3 (\textit{-20.4\%}) & 3403 (\textit{-30.4\%}) & \best{515.6} (\textit{-30.8\%}) & \best{3073} (\textit{-37.1\%})\\
         &0.35 & 709.2 & 4558 & 565.1 (\textit{-20.3\%}) & 3165 (\textit{-30.6\%}) & \best{491.7} (\textit{-30.7\%}) & \best{2846} (\textit{-37.6\%})\\
         &0.375& 638.6 & 4187 & 513.1 (\textit{-19.7\%}) & 2936 (\textit{-29.9\%}) & \best{446.9} (\textit{-30.0\%}) & \best{2593} (\textit{-38.1\%})\\

         \hline
    \end{tabular}
    \label{tab:synth_table}
\end{table*}

\section{Results}\label{sec:results}

We compare three implementations of faithfully rounded commutative $n$-bit multipliers returning an $n$-bit result. The baseline is a round to zero multiplier, implemented by computing the full precision $n$-bit multiplication, generating a 2$n$-bit result, from which we return the $n$ most significant bits. The second implementation is a truncated AND array, where we follow the approach in~\cite{Drane2014OnArrays} to compute the maximum possible truncation and an efficient constant compensation term. We do not compare against alternative truncated Booth implementations since these designs do not retain commutativity.

\subsection{Synthesis}\label{subsec:synth}
We synthesize each design using a commercial logic synthesis tool, targeting a standard TSMC 5nm library. We present results for a range of bitwidths and at a number of delay targets for each parameterization. The combinational area and total power consumption results reported by the logic synthesis tool are shown in Table~\ref{tab:synth_table}.
We also present the percentage improvement in each metric with respect to the baseline. We synthesize multipliers at relevant bitwidths between 16 and 64 bits. At lower bitwidths the overhead of Booth encoding is detrimental~\cite{Zimmermann2009DatapathDesign} and at higher bitwidths multiplier decomposition techniques, such as Karatsuba~\cite{Karatsuba1962MultiplicationComputers}, dominate~\cite{Ustun2022IMpress:HLS}. 

In Table~\ref{tab:synth_table} we can see that the truncated Booth array is up to \maxareaboothbase~smaller than the baseline implementation and consumes up to \maxpowerboothbase~less power. Furthermore, the truncated Booth array is up to \maxareaboothand~smaller than the truncated AND array and consumes up to \maxpowerboothand~less power. For 16-bit multiplication, we see that the truncated AND array is both smaller and more power efficient across all delay targets. For 24-bit multiplication, the truncated Booth multiplier is superior. Prior work on exact multiplier implementations also observed an architectural crossover point around 16-bits~\cite{Zimmermann2009DatapathDesign}, at which the overhead of Booth encoding is offset by the gains in array reduction. As we increase the bitwidth, the benefit offered by the truncated Booth array over the alternatives increases.   

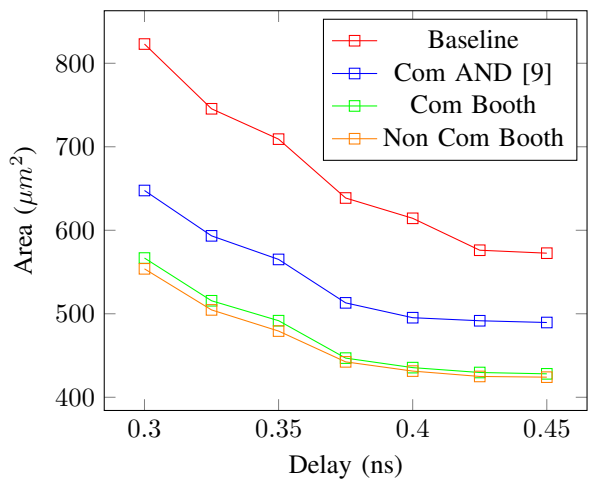
\begin{figure}
    \centering
    \begin{tikzpicture}[]
	\begin{axis}[%
		xlabel=Delay (ns),ylabel=Area ($\mu m^2$)]

\addplot[
    color=red,
    mark=square,
    ]
    coordinates {
(0.3,	823.031371)
(0.325,	745.501684)
(0.35,	709.151945)
(0.375,	638.648012)
(0.4,	614.325602)
(0.425,	576.10161)
(0.45,	572.56731)

};
\addlegendentry{Baseline}

\addplot[
    color=blue,
    mark=square,
    ]
    coordinates {
(0.3,	647.633701)
(0.325,	593.334004)
(0.35,	565.081025)
(0.375,	513.06255)
(0.4,	495.29466)
(0.425,	491.63184)
(0.45,	489.54339)

    };
    \addlegendentry{Com AND~\cite{Drane2014OnArrays}}

\addplot[
    color=green,
    mark=square,
    ]
    coordinates {
(0.3,	566.751777)
(0.325,	515.579401)
(0.35,	491.717522)
(0.375,	446.90688)
(0.4,	435.51144)
(0.425,	429.67449)
(0.45,	428.06799)
    };
    \addlegendentry{Com Booth}

\addplot[
    color=orange,
    mark=square,
    ]
    coordinates {
(0.3,553.814097)
(0.325,504.623071)
(0.35,479.283212)
(0.375,442.39797)
(0.4,431.35596)
(0.425,424.92996)
(0.45,424.08387)
    };
    \addlegendentry{Non Com Booth}

    \end{axis}
\end{tikzpicture}
    \caption{Area-delay profiles for the three competing commutative designs for n=64. We also plot the truncated Booth architecture without the compensation bits to recover commutativity.}
    \label{fig:area_delay_64}
\end{figure}

In Figure~\ref{fig:area_delay_64} we present complete area-delay profiles for the competing 64-bit multiplier implementations. Across the complete delay spectrum the truncated Booth multiplier demonstrates roughly constant area reduction when compared against the truncated AND array. 

To understand the penalty we must incur to recover commutativity, we also synthesize the 64-bit truncated Booth array without the additional compensation bits described in Section~\ref{subsec:compensation}. As we can see in Figure~\ref{fig:area_delay_64}, the area difference between the commutative and non-commutative truncated Booth implementations is minimal across the delay spectrum. At worst we pay a 2.5\% area penalty.

\subsection{Formal Verification}\label{subsec:verif}
For multiplier implementations beyond 16-bit, formal verification becomes challenging as the number of inputs exceeds what can be simulated. Furthermore, the verification of custom multiplier implementations is particularly challenging due to the circuit complexity, leading to bespoke tools~\cite{Koelbl2009SolverChecking,Burch1991UsingMultipliers} and methods~\cite{Kaivola2002FormalMultiplier,Kaufmann2019VerifyingAlgebra}. 

In this work, we deploy the S-C-Rewriting method~\cite{Temel2020AutomatedMultipliers,Temel2021SoundMultipliers} built on the ACL2 theorem prover~\cite{Kaufmann1996ACL2:Nqthm}, supported by the Glucose SAT solver. We use the tool to prove that the output, \texttt{out}, of the Verilog code implementing a truncated Booth multiplier satisfies the following lemma:
\begin{lstlisting}
mult = a[n-1:0]*b[n-1:0]
lsbs = mult[  n-1:0]
msbs = mult[2*n-1:n]
 
lemma (lsbs==0) ? out==msbs
                : 0<=out-msbs<=1
\end{lstlisting}
Where  \texttt{a} and  \texttt{b} are $n$-bit, and  \texttt{mult} is $2n$-bit wide unsigned bit-vectors representing integer values. This lemma guarantees that \texttt{out} is a faithful rounding. We also use the same tool to prove commutativity of the truncated Booth multiplier.

\begin{table}
    \centering
    \caption{ACL2 runtimes for proving that the truncated booth multiplier implements a faithful rounding and is commutative. The dash indicates a proof which did not converge.}
    \begin{tabular}{r|r|r}
         $n$ & Faithful (sec) & Commutative (sec)  \\
         \hline
            4	&3	 & 0.6 \\
            6	&4	 & 0.6 \\
            16	&7	 & 0.7 \\
            24	&44	 & 0.7 \\
            32	&63	 & 0.9 \\
            36	&117 & 1.0 \\
            42	&835 & 1.1\\
            64	&--	 & 2.0\\
    \end{tabular}
    
    \label{tab:verification}
\end{table}
We prove the two properties for a range of bitwidths $n$ and present the proof runtimes in Table~\ref{tab:verification}. Unfortunately, as the bitwidth $n$ increases, the proof of faithful rounding runtimes grow exponentially, meaning we are unable to prove the correctness of the 64-bit truncated Booth multiplier.

\section{Conclusion}
This paper provides the first implementation of a commutative truncated Booth multiplier, that produces a faithfully rounded result. We first described how, for minimal circuit area overhead, we can recover commutativity, via the introduction of a small number of compensation bits. We then proved exact bounds on the maximal error due to Booth array truncation and used these bounds to calculate the maximum number of columns which can be truncated. Lastly, we described how the addition of a constant can compensate for the error induced by truncation. We synthesized a number of faithfully rounded multiplier implementations and were able to reduce circuit area by up to \maxareaboothand~and reduce power consumption by up to~\maxpowerboothand~when compared to the state of the art. Using an ACL2 based verification tool, we were able to prove the correctness of the commutative truncated Booth multipliers up to 42 bits. 

Future work will look to generalize the results here to arbitrary Booth encoding radices, e.g. Booth Radix-8. A further generalization will consider different error thresholds, as opposed to the faithful rounding considered here. We will also address the limitations of our verification, exploring proof decomposition techniques. Lastly, an approach in~\cite{Drane2014OnArrays} can be used to incorporate these multiplier components into larger hardware designs, for example a floating-point multiplier.

\bibliographystyle{IEEEtran}
\bibliography{references}

\end{document}